%% file: main.tex
\documentclass[letterpaper,10 pt,conference]{ieeeconf}
\IEEEoverridecommandlockouts
\usepackage[utf8]{inputenc}
\usepackage{graphicx}
\usepackage[table]{xcolor}
\usepackage{hhline}
 
\usepackage{amsmath,amsthm}
\usepackage{mathtools}
\usepackage{booktabs}
\usepackage{dsfont}
\usepackage[linesnumbered,ruled,vlined]{algorithm2e}
\DontPrintSemicolon
\usepackage[style=ieee,sorting=none,maxbibnames=99,giveninits,doi=false]{biblatex}
\input{PackagesCommands}

\title{\LARGE \bf Discrete-Time Matrix Control Barrier Functions \\ 
for Disjunctive Safe Sets}

\title{\bf Convex Safety Filters for Nonconvex Safe Sets via \\ Discrete-Time Matrix Control Barrier Functions}

\title{Convex Safety Filtering via Spectral Selection for Nonconvex Safe Sets}

\author{Juan Augusto Paredes Salazar, James Usevitch, and Ankit Goel%
\thanks{J. A. Paredes Salazar and A. Goel are with the Department of Mechanical Engineering, University of Maryland, Baltimore County, Baltimore, MD 21250. {\tt\small japarede, ankgoel@umbc.edu}}%
\thanks{J. A. Paredes Salazar is now with the Department of Aerospace Engineering, The University of Michigan, Ann Arbor, MI 48109. {\tt\small jparedes@umich.edu}}%
\thanks{J. Usevitch is with the Department of Electrical and Computer Engineering, Brigham Young University, Provo, UT 84602. {\tt\small james\_usevitch@byu.edu}}%
}

\begin{document}
\maketitle
\thispagestyle{empty}
\pagestyle{empty}

\begin{abstract}
The discrete-time control barrier function condition for a safe set that is a union of convex sets is nonconvex in the control input.
For safe sets defined by a matrix concave function through the number of its nonnegative eigenvalues, this paper shows that the set is a union of convex sets, and that the nonconvexity arises because at least one member of the union must hold, not every member.
Selecting eigenvectors of the matrix function at the current state yields a convex input constraint that acts only on the eigenvalues determining membership in the set.
This constraint is implied by the matrix-wide condition of prior work, and it guarantees a geometric lower bound on each of those eigenvalues along closed-loop trajectories.
A double-integrator simulation with a polytope obstacle and a spectrahedron obstacle compares the proposed filter with the matrix-wide condition.
\end{abstract}

\section{Introduction}
\label{sec:intro}

Autonomous vehicles must satisfy safety constraints while executing their assigned tasks.
For autonomous cars, these constraints include safe separation from other vehicles, adherence to lane boundaries, and avoidance of pedestrians who enter the roadway.
For unmanned aerial vehicles, they include safe distances from obstacles and other vehicles and adherence to admissible flight regions.
Safety filters enforce such constraints by modifying the input of a nominal controller only when necessary to keep the state in a safe set \cite{wabersich2021filter,hsu2024filter}.
Filters based on control barrier functions (CBFs) provide formal safety guarantees while minimally modifying the nominal controller \cite{ames2017qp}, and have been applied to adaptive cruise control \cite{ames2014acc}, combined lane keeping and cruise control \cite{xu2018composition}, multicopter obstacle avoidance \cite{wu2016quadrotor}, and fixed-wing collision avoidance \cite{squires2018aircraft}.

CBFs extend barrier certificates for safety verification \cite{prajna2004certificates} to feedback design \cite{wieland2007constructive}.
Safety is formalized as forward invariance of the safe set.
For continuous-time, control-affine systems, a differentiable scalar barrier yields a constraint that is affine in the input, so the safety filter is a convex quadratic program whenever the remaining input constraints are convex, whether or not the safe set is convex \cite{ames2017qp}.
Digital controllers measure and actuate at sampling times, and continuous-time guarantees do not carry over directly to sampled implementations.
Discrete-time CBFs account for the state transition of each update and have been applied to bipedal robot navigation \cite{agrawal2017discrete} and to model predictive control for obstacle avoidance and car racing \cite{zeng2021mpccbf}.
Sampled-data conditions \cite{breeden2022sampled} and analyses based on approximate discrete-time models \cite{taylor2022approximate} connect digital implementation with continuous-time safety.

Discrete-time CBF constraints evaluate the barrier at the next state, and this composition is generally nonlinear in the input even for control-affine dynamics.
The constraint is convex in the input when the composition is concave, but common obstacle-avoidance barriers are not concave \cite{agrawal2017discrete,zeng2021mpccbf}.
Requiring the next position to lie outside a circular or ellipsoidal obstacle, for example, gives a reverse-convex constraint.
Convexity of the constraint in the input at a fixed state is the relevant property, and a nonconvex safe set does not by itself imply a nonconvex constraint.
A nonconvex filter requires either a global solver, with long solve times, or a local solver, which can return inputs that violate the constraint.
This paper considers safe sets that are unions of convex sets, such as the complement of a polytope, the complement of a spectrahedron, and free space described by convex cells.

Several works obtain convex discrete-time filters under additional structure.
Partially control-affine discrete-time CBFs yield affine input conditions through mixed-integer optimization \cite{khajenejad2021tractable}.
Duality-based filtering handles polytopic obstacles in continuous time \cite{thirugnanam2022duality}, and supporting hyperplanes yield convex programs within iterative discrete-time model predictive control \cite{liu2026iterative}.
The latter considers scalar-valued safe set functions and does not treat disjunctions of multiple CBFs.
These formulations assume specific dynamics or obstacle geometry, and some restrict the admissible states or inputs, which can introduce conservatism.

Matrix control barrier functions (MCBFs) represent safe sets through symmetric matrix-valued functions of the state.
In continuous time, they accommodate semidefinite and indefinite conditions, eigenvalue constraints, and conjunctions and disjunctions of scalar constraints, with filters posed as semidefinite programs \cite{ong2026matrix}.
Extensions cover combinatorial compositions \cite{ong2025combinatorial}, sampled-data robustness \cite{usevitch2026sampledmatrix}, and higher-order constraints \cite{gessow2026highordermatrix}.
The work \cite{usevitch2026dtmcbf} introduces discrete-time MCBFs and gives two convex constructions.
The first applies to safe sets on which a prescribed number of eigenvalues of the matrix function are nonnegative, which includes disjunctions of scalar constraints.
It bounds the decrease of the entire matrix at each step, and the resulting constraint is convex in the input when the matrix function is matrix concave.
The second projects the current state onto each convex component of the unsafe set and forms separating hyperplanes.
It requires the unsafe set to lie in a finite union of convex sets and solves one convex program per obstacle at each step, which is a semidefinite program for a spectrahedron.
The first construction also constrains eigenvalues that do not affect membership in the set, and the second requires a projection at each step.

This paper constructs a convex constraint that acts only on the eigenvalues that determine membership in the set.
When the matrix function is matrix concave, the safe set is a union of convex sets, and the discrete-time constraint is nonconvex because at least one member of the union must hold, not every member.
Selecting a member through eigenvectors of the matrix function at the current state gives a convex constraint on the next state.
The main contributions of the paper are as follows.
\begin{enumerate}

    \item The safe set is shown to be a union of convex sets when the matrix function is matrix concave (Proposition \ref{prop:union}).
    Complements of polytopes and spectrahedra, which are written as such sets in \cite{usevitch2026dtmcbf}, and unions of convex sets are covered (Section \ref{sec:obstacles}).
    \item A convex input constraint is constructed by selecting eigenvectors at the current state, for any number of required nonnegative eigenvalues (Section \ref{sec:main_results}).
    Each step requires one eigendecomposition and one convex program, and the construction does not use projection onto the unsafe set.
    \item Every eigenvalue that determines membership in the set is shown to satisfy a geometric lower bound along closed-loop trajectories (Theorem \ref{thm:main}).
    \item The condition of \cite{usevitch2026dtmcbf} that bounds the entire matrix is shown to imply the proposed constraint, so the proposed constraint admits at least as many inputs at each state (Proposition \ref{prop:weyl_condition}).
\end{enumerate}
%


The paper is organized as follows.
Section \ref{sec:prelim} gives notation, matrix concavity, and the discrete-time MCBF framework of \cite{usevitch2026dtmcbf}.
Section \ref{sec:obstacles} shows that complements of polytopes, complements of spectrahedra, and unions of convex sets share a common matrix-valued representation.
Section \ref{sec:main_results} presents the union characterization, the eigenvector selection constraint and its properties, and the comparison with the full-matrix condition.
Section \ref{sec:simulations} compares the two filters on a double integrator with two obstacles.
Section \ref{sec:conclusions} concludes the paper.

\section{Preliminaries}
\label{sec:prelim}

\subsection{Notation}

The set of symmetric real matrices of size \(p\times p\) is denoted \(\mathbb{S}^p\).
For \(A, B \in \mathbb{S}^p\), \(A \succeq B\) indicates that \(A - B\) is positive semidefinite, and \(A \succ B\) indicates that \(A - B\) is positive definite.
The eigenvalues of \(A \in \mathbb{S}^p\) are ordered as \(\lambda_1(A) \leq \lambda_2(A) \leq \cdots \leq \lambda_p(A)\).
The identity matrix of size \(r \times r\) is denoted \(I_r\).
A matrix \(Y \in \R^{p \times r}\) with \(Y^\intercal Y = I_r\) is called orthonormal.

The following two eigenvalue inequalities are used in the sequel.

\begin{lemma}[{Weyl \cite[Thm. 4.3.1]{horn2012matrix}}]
    \label{lem:weyl}
    Let \(A, B \in \mathbb{S}^p\).
    Then, for all \(i \in \{1,\ldots,p\}\),
    \begin{align}
        \lambda_i(A) + \lambda_1(B) \leq \lambda_i(A+B). \label{eq:weyl}
    \end{align}
\end{lemma}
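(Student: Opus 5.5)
The plan is to deduce \eqref{eq:weyl} from the Courant--Fischer min--max characterization of eigenvalues together with the Rayleigh-quotient bound for the smallest eigenvalue of \(B\). With the ascending ordering used in the paper, Courant--Fischer gives, for \(M \in \mathbb{S}^p\),
\begin{align*}
    \lambda_i(M) = \min_{\dim V = i} \; \max_{x \in V,\, \|x\| = 1} x^\intercal M x ,
\end{align*}
where \(V\) ranges over subspaces of \(\R^p\). The other ingredient is the elementary fact that \(x^\intercal B x \geq \lambda_1(B)\) for every unit vector \(x\). This follows by expanding \(x\) in an orthonormal eigenbasis of \(B\): the quadratic form is then a convex combination of the eigenvalues of \(B\).

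First I would fix an arbitrary subspace \(V\) with \(\dim V = i\). For every unit \(x \in V\), the Rayleigh bound gives
\begin{align*}
    x^\intercal (A+B) x = x^\intercal A x + x^\intercal B x \geq x^\intercal A x + \lambda_1(B).
\end{align*}
Taking the maximum over unit \(x \in V\) on both sides yields
\begin{align*}
    \max_{x} x^\intercal (A+B) x \geq \max_{x} x^\intercal A x + \lambda_1(B).
\end{align*}
By Courant--Fischer, the right-hand side is at least \(\lambda_i(A) + \lambda_1(B)\), since the maximum over \(V\) is no smaller than the minimum over all \(i\)-dimensional subspaces. This lower bound holds for every such \(V\), so taking the minimum over \(V\) on the left and applying Courant--Fischer to \(A+B\) gives \(\lambda_i(A+B) \geq \lambda_i(A) + \lambda_1(B)\).

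There is no real obstacle; the argument is routine. The one point needing care is the direction of the min--max with the paper's ascending ordering \(\lambda_1 \leq \cdots \leq \lambda_p\): one must use the min over \(i\)-dimensional subspaces of the max, not the max--min over \((p-i+1)\)-dimensional subspaces. Either form works if applied consistently. Alternatively, one can avoid Courant--Fischer entirely. Since \(B \succeq \lambda_1(B) I_p\), we have \(A + B \succeq A + \lambda_1(B) I_p\), and the eigenvalues of the latter are exactly \(\lambda_i(A) + \lambda_1(B)\). The claim then follows from monotonicity of each \(\lambda_i\) in the Loewner order, which is itself a direct consequence of Courant--Fischer.
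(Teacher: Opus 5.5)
Your proof is correct. With the paper's ascending ordering, $\lambda_i(M)=\min_{\dim V=i}\max_{x\in V,\,\|x\|=1}x^\intercal Mx$ is the right form of Courant--Fischer. The pointwise bound $x^\intercal(A+B)x\ge x^\intercal Ax+\lambda_1(B)$ survives the inner maximum and then the outer minimum exactly as you write.

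The paper gives no proof of its own and simply cites \cite[Thm.~4.3.1]{horn2012matrix}. There, Weyl's inequalities $\lambda_{i-j+1}(A)+\lambda_j(B)\le\lambda_i(A+B)$, $j\in\{1,\ldots,i\}$, and their upper-bound counterparts are proved by a dimension-counting argument. One picks a unit vector in an intersection of eigenvector spans of $A$, $B$, and $A+B$ that is forced to be nontrivial, and bounds the three Rayleigh quotients on it.

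The paper needs only the case $j=1$. In both places it is invoked (Theorem~\ref{thm:main}(iii) and Proposition~\ref{prop:weyl_condition}), the matrix playing the role of $B$ is known only to satisfy $B\succeq 0$. So, as your second route makes explicit, what is really used is Loewner monotonicity of $\lambda_i$ applied to $A+B\succeq A+\lambda_1(B)I_p$.

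Your min--max argument is the shorter, self-contained proof of exactly this special case. The intersection argument in the reference costs a dimension count but delivers all of Weyl's inequalities, and their equality cases, at once; the paper never uses that generality.
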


\begin{lemma}[{Cauchy interlacing \cite[Thm. 4.3.28]{horn2012matrix}}]
    \label{lem:interlacing}
    Let \(A \in \mathbb{S}^p\) and let \(Y \in \R^{p \times r}\) be orthonormal, where \(1 \leq r \leq p\).
    Then, for all \(i \in \{1,\ldots,r\}\),
    \begin{align}
        \lambda_i(A) \leq \lambda_i(Y^\intercal A Y) \leq \lambda_{i+p-r}(A). \label{eq:interlacing}
    \end{align}
\end{lemma}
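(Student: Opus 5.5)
The plan is to obtain both inequalities in \eqref{eq:interlacing} from the Courant--Fischer min--max characterization of eigenvalues, applied once to \(A\) and once to the compression \(B \coloneqq Y^\intercal A Y \in \mathbb{S}^r\). With the increasing ordering used in the paper, for \(A \in \mathbb{S}^p\) and \(k \in \{1,\ldots,p\}\),
\begin{align*}
    \lambda_k(A) = \min_{\substack{U \subseteq \R^p \\ \dim U = k}} \; \max_{\substack{x \in U \\ \|x\| = 1}} x^\intercal A x
    = \max_{\substack{W \subseteq \R^p \\ \dim W = p-k+1}} \; \min_{\substack{x \in W \\ \|x\| = 1}} x^\intercal A x .
\end{align*}
The analogous formulas hold for \(B\) with \(p\) replaced by \(r\). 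The key observation is that \(Y\) is an isometry from \(\R^r\) onto its range, since \(Y^\intercal Y = I_r\). Hence \(\|Yz\| = \|z\|\) and \(z^\intercal B z = (Yz)^\intercal A (Yz)\), and \(Y\) maps a \(k\)-dimensional subspace of \(\R^r\) to a \(k\)-dimensional subspace of \(\R^p\).

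For the lower bound \(\lambda_i(A) \leq \lambda_i(B)\), I use the min--max form. Let \(U' \subseteq \R^r\) with \(\dim U' = i\) attain the minimum for \(\lambda_i(B)\). Then \(YU'\) is an admissible \(i\)-dimensional subspace of \(\R^p\), and the isometry property gives
\begin{align*}
    \lambda_i(A) \leq \max_{x \in YU',\, \|x\|=1} x^\intercal A x = \max_{z \in U',\, \|z\|=1} z^\intercal B z = \lambda_i(B).
\end{align*}

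For the upper bound \(\lambda_i(B) \leq \lambda_{i+p-r}(A)\), I use the max--min form. Let \(W' \subseteq \R^r\) with \(\dim W' = r-i+1\) attain the maximum for \(\lambda_i(B)\). Then \(YW'\) has dimension \(r-i+1 = p-(i+p-r)+1\), so it is admissible in the max--min formula for \(\lambda_{i+p-r}(A)\). This index lies in \(\{1,\ldots,p\}\) because \(1 \leq i \leq r \leq p\). Therefore
\begin{align*}
    \lambda_{i+p-r}(A) \geq \min_{x \in YW',\, \|x\|=1} x^\intercal A x = \lambda_i(B).
\end{align*}

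The only delicate step is the index bookkeeping in the upper bound, namely matching the dimension \(r-i+1\) of \(YW'\) to the codimension form \(p-k+1\) with \(k = i+p-r\). With the increasing ordering this matches exactly, so I expect no real obstacle. If Courant--Fischer is not taken as given, it follows from a standard dimension-count argument: any \(k\)-dimensional subspace meets the span of the eigenvectors for \(\lambda_k,\ldots,\lambda_p\) nontrivially. Alternatively, as in the paper, the lemma can simply be cited from \cite[Thm. 4.3.28]{horn2012matrix}.
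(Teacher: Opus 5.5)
Your proof is correct. The paper itself gives no argument for this lemma; it only cites \cite[Thm.~4.3.28]{horn2012matrix}, so your proposal supplies a self-contained proof where the paper has a reference.

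Textbook interlacing of this kind is usually phrased for principal submatrices. Using it for an arbitrary orthonormal \(Y\) then needs one more step: complete \(Y\) to an orthogonal \(Q = \begin{bmatrix} Y & Y_\perp \end{bmatrix}\). Then \(Y^\intercal A Y\) is the leading \(r \times r\) principal submatrix of \(Q^\intercal A Q\), which has the same spectrum as \(A\).

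Your route avoids that reduction and handles the compression directly. You push the optimal subspaces for \(Y^\intercal A Y\) forward through the isometry \(Y\), and insert them into the min--max form of Courant--Fischer for the lower bound and the max--min form for the upper bound. Your index check \(r-i+1 = p-(i+p-r)+1\) is exactly what is needed.

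The citation buys brevity. Your argument buys self-containment, and it uses only the Courant--Fischer theorem, which the paper already relies on in the proof of Proposition~\ref{prop:union}.
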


\subsection{Matrix Concavity}

\begin{define}[{\cite[Sec. 3.6.2]{boyd2004convex}}]
    \label{def:matrix_concavity}
    A mapping \(H : \R^n \to \mathbb{S}^p\) is matrix concave if, for all \(x, y \in \R^n\) and all \(\theta \in [0,1]\),
    \begin{align}
        H(\theta x + (1-\theta) y) \succeq \theta H(x) + (1-\theta) H(y).
    \end{align}
    Equivalently, \(H\) is matrix concave if and only if \(x \mapsto v^\intercal H(x) v\) is concave for all \(v \in \R^p\).
\end{define}

Two consequences of Definition \ref{def:matrix_concavity} are used below.
First, if \(H\) is matrix concave and \(A \in \R^{n \times m}\), \(b \in \R^n\), then \(u \mapsto H(A u + b)\) is matrix concave, since the composition of a concave function with an affine mapping is concave \cite[Sec. 3.2.2]{boyd2004convex}.
Second, if \(H\) is matrix concave and \(Y \in \R^{p \times r}\), then \(x \mapsto Y^\intercal H(x) Y\) is matrix concave, since \(w^\intercal Y^\intercal H(x) Y w = v^\intercal H(x) v\) with \(v = Y w\).

\subsection{Discrete-Time Matrix Control Barrier Functions}

Consider the discrete-time control-affine system
\begin{align}
    x_{k+1} = f(x_k) + g(x_k) u_k, \label{eq:dynamics}
\end{align}
where \(x_k \in \R^n\) is the state, \(u_k \in \R^m\) is the control input, and \(f : \R^n \to \R^n\) and \(g : \R^n \to \R^{n \times m}\) are locally Lipschitz.

\begin{define}
    \label{def:forward_invariance}
    A set \(\SC \subset \R^n\) is forward invariant with respect to \eqref{eq:dynamics} if \(x_0 \in \SC\) implies \(x_k \in \SC\) for all \(k \geq 0\).
\end{define}

Given \(H : \R^n \to \mathbb{S}^p\), the work \cite{usevitch2026dtmcbf} considers safe sets of the form
\begin{align}
    \SSS = \{ x \in \R^n : H(x) \succeq 0 \}, \label{eq:matrix_safe_set}
\end{align}
and shows that \(\SSS\) is rendered forward invariant by any control law satisfying \(u_k \in \mathcal{K}_H(x_k)\) for all \(k \geq 0\), where
\small
\begin{align}
    \mathcal{K}_H(x) \triangleq \{ u \in \R^m : H(f(x) + g(x) u) \succeq (1-\gamma) H(x) \} \label{eq:K_H}
\end{align}
\normalsize
and \(0 < \gamma \leq 1\).
Given a nominal control law \(u_{\rm nom} : \R^n \to \R^m\), a safe control input is computed by
\small
\begin{subequations}
\label{eq:safety_filter}
\begin{align}
    u^*(x_k) = \underset{u \in \R^m}{\arg\min} \ & \nrm{u - u_{\rm nom}(x_k)}_2 \\
    \text{s.t.} \ & H(f(x_k) + g(x_k) u) \succeq (1-\gamma) H(x_k). \label{eq:safety_filter_constraint}
\end{align}
\end{subequations}
\normalsize
If \(H\) is matrix concave, then \eqref{eq:safety_filter_constraint} is a linear matrix inequality in \(u\) and \eqref{eq:safety_filter} is convex.
If \(H\) is not matrix concave, then \eqref{eq:safety_filter} is nonconvex in general.

\subsection{Safe Subset Functions}

To handle safe sets \eqref{eq:matrix_safe_set} defined by a nonconcave \(H\), the work \cite{usevitch2026dtmcbf} replaces \(H\) by a two-argument function whose first argument is the point whose safety is tested and whose second argument selects a convex subset of \(\SSS\).

\begin{define}[{\cite{usevitch2026dtmcbf}}]
    \label{def:safe_subset_function}
    Let \(\SSS \subset \R^n\) and let \(\mathcal{D} \supseteq \SSS\).
    A function \(\Phi : \R^n \times \R^n \to \mathbb{S}^q\) is a \emph{safe subset function} for \(\SSS\) on \(\mathcal{D}\) if
    \begin{align}
        \{ z \in \R^n : \Phi(z, x) \succeq 0 \} \subseteq \SSS \quad \forall x \in \mathcal{D}, \label{eq:subset_property}
    \end{align}
    and \(\Phi(x, x) \succeq 0\) for all \(x \in \SSS\).
\end{define}

\begin{define}[{\cite{usevitch2026dtmcbf}}]
    \label{def:SDTE-MCBF}
    A safe subset function \(\Phi\) for \(\SSS\) on \(\mathcal{D}\) is a \emph{subset-based discrete-time exponential matrix control barrier function} (SDTE-MCBF) for \eqref{eq:dynamics} if there exists \(0 < \gamma \leq 1\) such that, for all \(x \in \mathcal{D}\), there exists \(u \in \R^m\) satisfying
    \begin{align}
        \Phi(f(x) + g(x) u, x) \succeq (1-\gamma) \Phi(x, x). \label{eq:SDTE_condition}
    \end{align}
\end{define}

Given an SDTE-MCBF \(\Phi\), define
\footnotesize
\begin{align}
    \mathcal{K}_\Phi(x) \triangleq \{ u \in \R^m : \Phi(f(x) + g(x) u, x) \succeq (1-\gamma) \Phi(x, x) \}. \label{eq:K_Phi}
\end{align}
\normalsize

\begin{theorem}[{\cite[Thm. X]{usevitch2026dtmcbf}}]
    \label{thm:SDTE_invariance}
    Let \(\Phi\) be an SDTE-MCBF for \(\SSS\) on \(\mathcal{D}\).
    Then any control law satisfying \(u_k \in \mathcal{K}_\Phi(x_k)\) for all \(k \geq 0\) renders \(\SSS\) forward invariant.
\end{theorem}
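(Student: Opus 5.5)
The plan is induction on \(k\): show that \(x_k\) remains in \(\SSS\) as long as every input is drawn from \(\mathcal{K}_\Phi(x_k)\). The intermediate claim I aim for is that \(x_k \in \SSS\) implies \(x_{k+1} \in \SSS\), with \(x_{k+1} = f(x_k) + g(x_k)u_k\). Combined with \(x_0 \in \SSS\), this claim gives \(x_k \in \SSS\) for all \(k \geq 0\), which is forward invariance in the sense of Definition \ref{def:forward_invariance}.

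For the inductive step, fix \(x_k \in \SSS\). Since \(\mathcal{D} \supseteq \SSS\), we have \(x_k \in \mathcal{D}\), so the subset property \eqref{eq:subset_property} applies with second argument \(x_k\). The second requirement of Definition \ref{def:safe_subset_function} gives \(\Phi(x_k, x_k) \succeq 0\). Since \(0 < \gamma \leq 1\), we have \(1-\gamma \geq 0\), and therefore \((1-\gamma)\Phi(x_k,x_k) \succeq 0\).

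The input satisfies \(u_k \in \mathcal{K}_\Phi(x_k)\), so by \eqref{eq:K_Phi},
\begin{align*}
\Phi(x_{k+1}, x_k) \succeq (1-\gamma)\Phi(x_k,x_k) \succeq 0 .
\end{align*}
Transitivity of \(\succeq\) holds because a sum of positive semidefinite matrices is positive semidefinite. Hence \(x_{k+1}\) lies in \(\{z : \Phi(z,x_k) \succeq 0\}\), and by \eqref{eq:subset_property} this set is contained in \(\SSS\). Thus \(x_{k+1} \in \SSS\).

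One point needs care: \(\mathcal{K}_\Phi(x_k)\) must be nonempty so that the control law is well defined. This is exactly what Definition \ref{def:SDTE-MCBF} provides for every \(x \in \mathcal{D}\), and we have \(x_k \in \mathcal{D}\) because \(x_k \in \SSS\) and \(\SSS \subseteq \mathcal{D}\).

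I expect no real obstacle in this argument. The only subtle step is that the subset property is used with the second argument frozen at the current state \(x_k\), not at \(x_{k+1}\). The convex subset selected at step \(k\) certifies only \(x_{k+1} \in \SSS\), and a fresh subset, indexed by \(x_{k+1}\), is used at the next step. This is consistent because the induction only ever needs the base point to lie in \(\SSS \subseteq \mathcal{D}\).
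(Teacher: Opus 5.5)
Your proof is correct. It is the standard argument: a one-step induction that freezes the second argument of \(\Phi\) at \(x_k \in \SSS \subseteq \mathcal{D}\), chains \(\Phi(x_{k+1},x_k) \succeq (1-\gamma)\Phi(x_k,x_k) \succeq 0\), and then applies the subset property \eqref{eq:subset_property}. The paper does not prove this theorem itself but imports it from \cite{usevitch2026dtmcbf}; its later reasoning, such as the switching remark that invariance is established pointwise at each step, follows your argument.
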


In \cite{usevitch2026dtmcbf}, the subset in \eqref{eq:subset_property} is constructed by projecting \(x\) onto each convex component of the unsafe set and forming a separating hyperplane.
That construction requires the unsafe set to be contained in a finite union of closed convex sets, so that the safe set is an intersection of the complements.
The present paper considers the complementary case, in which the safe set is itself a union of convex sets, and constructs the subset in \eqref{eq:subset_property} by a different mechanism.

\section{Nonconvex Safe Sets Arising from Convex Obstacles}
\label{sec:obstacles}

\begin{figure}[h]
    \vspace{-0.5em}
    \centering
    \input{Figures/fig_obstacles_body}
    \caption{
    Nonconvex safe sets arising from convex obstacles.
    (A) The complement of a polytope is a finite union of halfspaces; shading depth indicates the number of halfspaces containing a point.
    (B) The complement of a spectrahedron is a union of halfspaces \(\{x : v^\intercal H(x) v \geq 0\}\) indexed by the unit sphere; a few members are shown dashed.
    (C) A union of convex cells \(\SSS_1 \cup \SSS_2 \cup \SSS_3\).
    }
    \label{fig:obstacles}
    \vspace{-1.5em}
\end{figure}

This section presents three classes of safe sets, shown in Figure \ref{fig:obstacles}, and writes each in the form
\begin{align}
    \SSS = \{ x \in \R^n : \lambda_p(H(x)) \geq 0 \} \label{eq:lambda_max_set}
\end{align}
for a matrix concave \(H : \R^n \to \mathbb{S}^p\).
In each case \(\SSS\) is nonconvex and the constraint obtained by replacing \(H \succeq 0\) in \eqref{eq:safety_filter_constraint} with \(\lambda_p(H) \geq 0\) is nonconvex in \(u\).
Sets of the form \eqref{eq:lambda_max_set} are treated in Section \ref{sec:main_results}.

\subsection{Complement of a Polytope}
\label{ssec:polytope}

Let \(\overline{\SSS}_\mathrm{P} = \{ x \in \R^n : a_i^\intercal x \leq b_i,\ i \in \{1,\ldots,p\} \}\) with \(a_i \in \R^n\), \(b_i \in \R\), and let the safe set be its complement, Figure \ref{fig:obstacles}(A).
Define
\begin{align}
    H_\mathrm{P}(x) \triangleq \operatorname{diag}\!\pth{a_1^\intercal x - b_1, \ldots, a_p^\intercal x - b_p}. \label{eq:H_polytope}
\end{align}
Then \(H_\mathrm{P}\) is affine, hence matrix concave, \(\lambda_p(H_\mathrm{P}(x)) = \max_i (a_i^\intercal x - b_i)\), and
\small
\begin{align}
    \SSS_\mathrm{P} = \{ x \in \R^n : \lambda_p(H_\mathrm{P}(x)) \geq 0 \} = \bigcup_{i=1}^p \{ x : a_i^\intercal x \geq b_i \}. \label{eq:polytope_safe}
\end{align}
\normalsize
%
The constraint \(\max_i \pth{a_i^\intercal (f(x_k) + g(x_k) u) - b_i} \geq (1-\gamma) \lambda_p(H_\mathrm{P}(x_k))\) confines \(u\) to a union of \(p\) halfspaces in \(\R^m\).
Enforcing all \(p\) inequalities simultaneously would be convex but corresponds to \(\{x : H_\mathrm{P}(x) \succeq 0\}\), which is empty when \(\overline{\SSS}_\mathrm{P}\) is bounded.

\subsection{Complement of a Spectrahedron}
\label{ssec:spectrahedron}

Let \(A_0, A_1, \ldots, A_n \in \mathbb{S}^p\) and let \(\overline{\SSS}_\mathrm{Sp} = \{ x \in \R^n : A_0 + \sum_{i=1}^n x_i A_i \succeq 0 \}\), a spectrahedron \cite{ramana1995some}, be the obstacle, Figure \ref{fig:obstacles}(B).
%
Define
\begin{align}
    H_\mathrm{Sp}(x) \triangleq -\Big( A_0 + \sum_{i=1}^n x_i A_i \Big). \label{eq:H_spectrahedron}
\end{align}
Then \(H_\mathrm{Sp}\) is affine, and \(x \notin \operatorname{int}(\overline{\SSS}_\mathrm{Sp})\) if and only if \(H_\mathrm{Sp}(x) \not\prec 0\), so that
\begin{align}
    \SSS_\mathrm{Sp} 
        &=
            \{ x \in \R^n : \lambda_p(H_\mathrm{Sp}(x)) \geq 0 \} 
        \nn \\
        &= \bigcup_{\nrm{v}_2 = 1} \{ x : v^\intercal H_\mathrm{Sp}(x) v \geq 0 \}, \label{eq:spectrahedron_safe}
\end{align}
where the second equality follows from \(\lambda_p(A) = \max_{\nrm{v}_2 = 1} v^\intercal A v\).
Each set in the union is a halfspace, and the polytope of Section \ref{ssec:polytope} is the special case of diagonal \(A_i\), for which the union is finite.
Since \(\lambda_p\) of an affine matrix function is convex \cite[Ex. 3.10]{boyd2004convex}, the constraint \(\lambda_p(H_\mathrm{Sp}(f(x_k) + g(x_k) u)) \geq (1-\gamma) \lambda_p(H_\mathrm{Sp}(x_k))\) is the superlevel set of a convex function of \(u\), which is nonconvex.

\subsection{Union of Convex Sets}
\label{ssec:union}

Let \(\SSS_i = \{ x \in \R^n : H_i(x) \succeq 0 \}\), \(i \in \{1,\ldots,Q\}\), where each \(H_i : \R^n \to \mathbb{S}^{p_i}\) is matrix concave, and let the safe set be \(\SSS_\cup = \bigcup_{i=1}^Q \SSS_i\), Figure \ref{fig:obstacles}(C).
This case arises when the free space is described by convex cells or corridors.

The block-diagonal function \(\operatorname{diag}(H_1(x), \ldots, H_Q(x))\) does not encode \(\SSS_\cup\) through \eqref{eq:lambda_max_set}, since its largest eigenvalue is nonnegative whenever some \(H_i(x)\) has a nonnegative eigenvalue, which does not imply \(H_i(x) \succeq 0\).
Each \(\SSS_i\) is instead reduced to a scalar inequality.

\begin{lemma}
    \label{lem:lambda_min_concave}
    If \(H_i\) is matrix concave, then \(h_i(x) \triangleq \lambda_1(H_i(x))\) is concave and \(\SSS_i = \{ x : h_i(x) \geq 0 \}\).
\end{lemma}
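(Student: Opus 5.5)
The plan is to write the smallest eigenvalue as a pointwise infimum of concave functions, using the variational (Rayleigh quotient) characterization
\begin{align}
    \lambda_1(A) = \min_{\nrm{v}_2 = 1} v^\intercal A v, \qquad A \in \mathbb{S}^{p_i}.
\end{align}
This is the counterpart of the characterization \(\lambda_p(A) = \max_{\nrm{v}_2=1} v^\intercal A v\) used in Section \ref{ssec:spectrahedron}.

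First, by Definition \ref{def:matrix_concavity}, matrix concavity of \(H_i\) means that for every fixed unit vector \(v \in \R^{p_i}\) the scalar map \(x \mapsto v^\intercal H_i(x) v\) is concave. It follows that \(h_i(x) = \min_{\nrm{v}_2=1} v^\intercal H_i(x) v\) is a pointwise infimum of a family of concave functions, and such an infimum is concave \cite[Sec. 3.2.3]{boyd2004convex}. The minimum over the compact unit sphere is attained, so \(h_i\) is finite-valued everywhere.

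If one prefers to avoid the general infimum rule, concavity can also be checked directly. Fix \(x, y \in \R^n\) and \(\theta \in [0,1]\). Take a unit minimizing vector \(v\) for \(H_i(\theta x + (1-\theta) y)\). Then
\begin{align}
    h_i(\theta x + (1-\theta)y) &= v^\intercal H_i(\theta x + (1-\theta)y) v \nn \\
    &\geq \theta\, v^\intercal H_i(x) v + (1-\theta)\, v^\intercal H_i(y) v \nn \\
    &\geq \theta h_i(x) + (1-\theta) h_i(y).
\end{align}
The first inequality is matrix concavity of \(H_i\). The second uses that each quadratic form is bounded below by the minimum eigenvalue.

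Second, for the set identity, note that a symmetric matrix is positive semidefinite if and only if its smallest eigenvalue is nonnegative. Hence \(H_i(x) \succeq 0\) holds exactly when \(\lambda_1(H_i(x)) \geq 0\), which gives \(\SSS_i = \{x : h_i(x) \geq 0\}\).

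No step presents a real obstacle. The only care needed is that the minimizing \(v\) depends on the point \(\theta x + (1-\theta)y\), and the direct argument handles this by fixing \(v\) at that point before applying matrix concavity.
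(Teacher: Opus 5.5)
Your proposal is correct and follows the paper's proof: both write \(h_i\) as the pointwise minimum over unit vectors of the concave maps \(x \mapsto v^\intercal H_i(x) v\), invoke the infimum rule of \cite[Sec. 3.2.3]{boyd2004convex}, and obtain the set identity from the fact that \(H_i(x) \succeq 0\) if and only if \(\lambda_1(H_i(x)) \geq 0\). Your extra direct check, which fixes a minimizer at \(\theta x + (1-\theta) y\) before applying matrix concavity, is valid and self-contained but not needed.
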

\begin{proof}
    Since \(h_i(x) = \min_{\nrm{v}_2 = 1} v^\intercal H_i(x) v\) is a pointwise minimum of concave functions of \(x\), it is concave \cite[Sec. 3.2.3]{boyd2004convex}.
    The set identity holds since \(H_i(x) \succeq 0\) if and only if \(\lambda_1(H_i(x)) \geq 0\).
\end{proof}

Define
\begin{align}
    H_\cup(x) \triangleq \operatorname{diag}\!\pth{h_1(x), \ldots, h_Q(x)}. \label{eq:H_union}
\end{align}
By Lemma \ref{lem:lambda_min_concave}, \(H_\cup\) is matrix concave, \(\lambda_Q(H_\cup(x)) = \max_i h_i(x)\), and \(\SSS_\cup = \{ x : \lambda_Q(H_\cup(x)) \geq 0 \}\), which is of the form \eqref{eq:lambda_max_set} with \(p = Q\).
The constraint \(\max_i h_i(f(x_k) + g(x_k) u) \geq (1-\gamma) \max_i h_i(x_k)\) confines \(u\) to a union of \(Q\) convex sets.

\subsection{Summary}

In all three cases the safe set is a union of convex sets, each of the form \(\{ x : v^\intercal H(x) v \geq 0 \}\) for some \(v \in \R^p\) and matrix concave \(H\), and the safety filter is nonconvex only because at least one, rather than every, such inequality must hold.
Section \ref{sec:main_results} shows that selecting \(v\) from the current state at each step yields a convex constraint that preserves forward invariance.

\section{Disjunctive Safe Sets and Convex Safety Filters}
\label{sec:main_results}


\subsection{Disjunctive Safe Sets}

Let \(H : \R^n \to \mathbb{S}^p\) and \(j \in \{1,\ldots,p\}\), and define
\begin{align}
    \SSS^{(j)} \triangleq \{ x \in \R^n : \lambda_j(H(x)) \geq 0 \}. \label{eq:S_j}
\end{align}
Since the eigenvalues are ordered, \(x \in \SSS^{(j)}\) if and only if at least \(r \triangleq p - j + 1\) eigenvalues of \(H(x)\) are nonnegative.
For \(j = 1\), \eqref{eq:S_j} reduces to the conjunctive set \eqref{eq:matrix_safe_set}, and for \(j = p\) to the set \eqref{eq:lambda_max_set} of Section \ref{sec:obstacles}.
For diagonal \(H = \operatorname{diag}(h_1, \ldots, h_p)\), \(\SSS^{(j)}\) is the set on which at least \(r\) of the \(p\) scalar inequalities \(h_i(x) \geq 0\) hold, which is the discrete-time counterpart of the \(p\)-choose-\(r\) constraints of \cite{ong2025combinatorial}.

The next result shows that \(\SSS^{(j)}\) is a union of convex sets whenever \(H\) is matrix concave, extending the observation of Section \ref{sec:obstacles} to \(j < p\).
Let \(\mathcal{O}_{p,r}\) denote the set of orthonormal matrices \(Y \in \R^{p \times r}\).

\begin{proposition}
    \label{prop:union}
    Let \(H\) be matrix concave.
    Then
    \begin{align}
        \SSS^{(j)} = \bigcup_{Y \in \mathcal{O}_{p,r}} \{ x \in \R^n : Y^\intercal H(x) Y \succeq 0 \}, \label{eq:S_j_union}
    \end{align}
    and each set in the union is convex.
\end{proposition}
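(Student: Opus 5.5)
The plan is to prove the set identity by two inclusions, both resting on Cauchy interlacing (Lemma \ref{lem:interlacing}), and to get convexity from the second consequence of matrix concavity noted after Definition \ref{def:matrix_concavity}. Throughout, \(r = p - j + 1\), so that \(i + p - r = i + j - 1\) in \eqref{eq:interlacing}.

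For the inclusion \(\supseteq\), fix \(Y \in \mathcal{O}_{p,r}\) and a point \(x\) with \(Y^\intercal H(x) Y \succeq 0\). Then \(\lambda_1(Y^\intercal H(x) Y) \geq 0\). Applying the upper bound in \eqref{eq:interlacing} with \(i = 1\) gives \(\lambda_1(Y^\intercal H(x) Y) \leq \lambda_{1+p-r}(H(x)) = \lambda_j(H(x))\). Hence \(\lambda_j(H(x)) \geq 0\), that is, \(x \in \SSS^{(j)}\). This direction does not use concavity.

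For the inclusion \(\subseteq\), take \(x \in \SSS^{(j)}\) and an orthonormal eigenbasis \(v_1,\ldots,v_p\) of \(H(x)\), with \(v_i\) associated with \(\lambda_i(H(x))\). Let \(Y = [v_j, \ldots, v_p] \in \R^{p\times r}\). This matrix is orthonormal, and
\[
Y^\intercal H(x) Y = \operatorname{diag}(\lambda_j(H(x)), \ldots, \lambda_p(H(x))).
\]
Every diagonal entry is at least \(\lambda_j(H(x)) \geq 0\) by the ordering of the eigenvalues, so \(Y^\intercal H(x) Y \succeq 0\) and \(x\) lies in the member of the union indexed by this \(Y\). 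This is the only step that needs a specific construction, and it also previews the eigenvector selection used later in the section.

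For convexity, fix \(Y\). The map \(x \mapsto Y^\intercal H(x) Y\) is matrix concave by the remark after Definition \ref{def:matrix_concavity}. If \(Y^\intercal H(x) Y \succeq 0\) and \(Y^\intercal H(y) Y \succeq 0\), then for \(\theta \in [0,1]\),
\[
Y^\intercal H(\theta x + (1-\theta)y) Y \succeq \theta Y^\intercal H(x) Y + (1-\theta) Y^\intercal H(y) Y \succeq 0.
\]
The last step holds because the PSD cone is convex. Hence each member of the union is convex.

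I do not expect a real obstacle here. The only point needing care is the index bookkeeping \(1 + p - r = j\) in the interlacing bound.
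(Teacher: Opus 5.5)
Your proof is correct and follows the paper's route. The paper invokes the Courant--Fischer identity \(\lambda_j(A) = \max_{Y \in \mathcal{O}_{p,r}} \lambda_1(Y^\intercal A Y)\) in one line, and your two inclusions are its two halves (the interlacing bound of Lemma \ref{lem:interlacing} for \(\supseteq\), attainment by the eigenvector matrix for \(\subseteq\)), while your convexity argument via matrix concavity of the compression and convexity of the semidefinite cone matches the paper's.
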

\begin{proof}
    By the Courant--Fischer theorem \cite[Thm. 4.2.6]{horn2012matrix}, \(\lambda_j(A) = \max_{Y \in \mathcal{O}_{p,r}} \lambda_1(Y^\intercal A Y)\) for \(A \in \mathbb{S}^p\), which gives \eqref{eq:S_j_union}.
    Each \(x \mapsto Y^\intercal H(x) Y\) is matrix concave by Section \ref{sec:prelim}, so its superlevel set is convex.
\end{proof}

\subsection{Eigenvector Selection}

Proposition \ref{prop:union} suggests fixing \(Y\) at each step and enforcing the corresponding convex inequality.
The natural choice is the \(Y\) that attains the maximum in the Courant--Fischer characterization at the current state.
For \(x \in \R^n\), let \(Y(x) \in \mathcal{O}_{p,r}\) have columns that are eigenvectors of \(H(x)\) associated with \(\lambda_j(H(x)), \ldots, \lambda_p(H(x))\), and let
\begin{equation}
    \resizebox{\columnwidth}{!}{$\displaystyle
    \Lambda(x) \triangleq Y(x)^\intercal H(x) Y(x) = \operatorname{diag}\!\pth{\lambda_j(H(x)), \ldots, \lambda_p(H(x))}
    $}
    \label{eq:Lambda}
\end{equation}
%
If \(\lambda_{j-1}(H(x)) = \lambda_j(H(x))\), then \(Y(x)\) is not unique; any choice satisfying \eqref{eq:Lambda} is admissible below.
Define \(\Phi_H : \R^n \times \R^n \to \mathbb{S}^r\) by
\begin{align}
    \Phi_H(z, x) \triangleq Y(x)^\intercal H(z)\, Y(x). \label{eq:Phi_H}
\end{align}

\begin{lemma}
    \label{lem:Phi_safe_subset}
    \(\Phi_H\) is a safe subset function for \(\SSS^{(j)}\) on \(\mathcal{D} = \R^n\).
\end{lemma}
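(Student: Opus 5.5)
We verify the two requirements of Definition \ref{def:safe_subset_function} with \(\SSS = \SSS^{(j)}\) and \(\mathcal{D} = \R^n\).

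\emph{Subset property \eqref{eq:subset_property}.} Fix any \(x \in \R^n\) and let \(z\) satisfy \(\Phi_H(z,x) = Y(x)^\intercal H(z) Y(x) \succeq 0\). Since \(Y(x) \in \mathcal{O}_{p,r}\), this says \(z\) lies in the member of the union \eqref{eq:S_j_union} indexed by \(Y = Y(x)\). To keep the argument independent of matrix concavity, we do not invoke Proposition \ref{prop:union}. Instead we apply the lower bound of Cauchy interlacing (Lemma \ref{lem:interlacing}) with \(A = H(z)\) and \(i = 1\):
\begin{align}
    \lambda_1\big(Y(x)^\intercal H(z) Y(x)\big) \leq \lambda_{1+p-r}(H(z)) = \lambda_j(H(z)),
\end{align}
where \(1 + p - r = j\) because \(r = p - j + 1\). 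The left-hand side is nonnegative by assumption, so \(\lambda_j(H(z)) \geq 0\), that is, \(z \in \SSS^{(j)}\). This holds for every \(x\), which gives \eqref{eq:subset_property} on \(\mathcal{D} = \R^n\).

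\emph{Diagonal property.} Let \(x \in \SSS^{(j)}\). By \eqref{eq:Lambda}, \(\Phi_H(x,x) = \Lambda(x) = \operatorname{diag}(\lambda_j(H(x)), \ldots, \lambda_p(H(x)))\). Because the eigenvalues are ordered, every diagonal entry is at least \(\lambda_j(H(x)) \geq 0\), so \(\Phi_H(x,x) \succeq 0\).

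The argument contains no real obstacle. The only points needing care are matching the interlacing index \(i + p - r\) with \(j\), and observing that a non-unique choice of \(Y(x)\) is harmless. Any \(Y(x)\) satisfying \eqref{eq:Lambda} gives the same \(\Lambda(x)\), which settles the diagonal property. The subset property holds for every orthonormal \(Y\), not just the eigenvector choice. Matrix concavity of \(H\) is not needed for this lemma; it matters only for convexity of the resulting input constraint.
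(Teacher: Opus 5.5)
Your proof is correct and matches the paper's argument: the interlacing inequality of Lemma~\ref{lem:interlacing} with \(i = 1\) and \(1+p-r = j\) gives the subset property, and \(\Phi_H(x,x) = \Lambda(x) \succeq 0\) gives the diagonal property. One wording slip: the inequality you invoke is the upper bound \(\lambda_i(Y^\intercal A Y) \leq \lambda_{i+p-r}(A)\) in \eqref{eq:interlacing}, not the lower bound, though the inequality you actually display is the correct one.
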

\begin{proof}
    Let \(x \in \R^n\) and \(z\) satisfy \(\Phi_H(z, x) \succeq 0\), that is, \(\lambda_1(Y(x)^\intercal H(z) Y(x)) \geq 0\).
    By Lemma \ref{lem:interlacing} with \(i = 1\), \(\lambda_1(Y(x)^\intercal H(z) Y(x)) \leq \lambda_{1+p-r}(H(z)) = \lambda_j(H(z))\), so \(z \in \SSS^{(j)}\), which verifies \eqref{eq:subset_property}.
    Finally, \(\Phi_H(x, x) = \Lambda(x) \succeq 0\) for all \(x \in \SSS^{(j)}\) by \eqref{eq:Lambda}.
\end{proof}

The constraint \eqref{eq:SDTE_condition} applied to \(\Phi_H\) reads
\begin{align}
    Y(x_k)^\intercal H(f(x_k) + g(x_k) u)\, Y(x_k) \succeq (1-\gamma)\, \Lambda(x_k). \label{eq:main_constraint}
\end{align}

\begin{theorem}
    \label{thm:main}
    Let \(H\) be matrix concave and \(0 < \gamma \leq 1\), and suppose that for all \(x \in \R^n\) there exists \(u \in \R^m\) satisfying \eqref{eq:main_constraint}.
    Then the following hold.
    \begin{enumerate}
        \item[(i)] For each \(x_k\), \eqref{eq:main_constraint} is a convex matrix inequality in \(u\), and the safety filter
        \begin{subequations}
        \label{eq:main_filter}
        \begin{align}
            u^*(x_k) = \underset{u \in \R^m}{\arg\min} \ & \nrm{u - u_{\rm nom}(x_k)}_2 \\
            \text{s.t.} \ & \eqref{eq:main_constraint}
        \end{align}
        \end{subequations}
        is a convex program.
        \item[(ii)] Any control law satisfying \eqref{eq:main_constraint} for all \(k \geq 0\) renders \(\SSS^{(j)}\) forward invariant.
        \item[(iii)] Along any such trajectory, for all \(i \in \{j, \ldots, p\}\) and all \(k \geq 0\),
        \begin{align}
            \lambda_i(H(x_k)) \geq (1-\gamma)^k \lambda_i(H(x_0)). \label{eq:eig_bound}
        \end{align}
    \end{enumerate}
\end{theorem}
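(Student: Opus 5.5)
For part (i), I would fix \(x_k\), so that \(Y(x_k)\) and \(\Lambda(x_k)\) are constant matrices. The map \(u \mapsto f(x_k) + g(x_k)u\) is affine, so by the first consequence of Definition \ref{def:matrix_concavity} the map \(u \mapsto H(f(x_k)+g(x_k)u)\) is matrix concave. By the second consequence, \(G(u) \triangleq Y(x_k)^\intercal H(f(x_k)+g(x_k)u) Y(x_k) - (1-\gamma)\Lambda(x_k)\) is also matrix concave in \(u\). The feasible set \(\{u : G(u) \succeq 0\}\) equals \(\bigcap_{\nrm{w}_2=1}\{u : w^\intercal G(u) w \geq 0\}\). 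Each member is a superlevel set of a concave function, so the intersection is convex. The objective \(\nrm{u-u_{\rm nom}(x_k)}_2\) is convex, so \eqref{eq:main_filter} is a convex program; the standing hypothesis makes it feasible.

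For part (ii), I would combine earlier results directly. Lemma \ref{lem:Phi_safe_subset} shows that \(\Phi_H\) is a safe subset function for \(\SSS^{(j)}\) on \(\mathcal{D}=\R^n\). The hypothesis that \eqref{eq:main_constraint} is feasible for every \(x\) is exactly \eqref{eq:SDTE_condition} for \(\Phi_H\), so \(\Phi_H\) is an SDTE-MCBF, and \eqref{eq:main_constraint} is membership in \(\mathcal{K}_{\Phi_H}(x_k)\). Theorem \ref{thm:SDTE_invariance} then gives forward invariance of \(\SSS^{(j)}\).

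For part (iii), I would prove the one-step bound \(\lambda_i(H(x_{k+1})) \geq (1-\gamma)\lambda_i(H(x_k))\) for \(i\in\{j,\ldots,p\}\) and then induct. Write \(M \triangleq Y(x_k)^\intercal H(x_{k+1})Y(x_k)\) and index its eigenvalues by \(l\in\{1,\ldots,r\}\), matching \(i = l+j-1\). The argument has three inequalities.
\begin{enumerate}
    \item Upper interlacing, Lemma \ref{lem:interlacing}, gives \(\lambda_l(M) \leq \lambda_{l+p-r}(H(x_{k+1})) = \lambda_{l+j-1}(H(x_{k+1}))\).
    \item Constraint \eqref{eq:main_constraint} says \(M = (1-\gamma)\Lambda(x_k) + B\) with \(B\succeq 0\). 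Weyl's inequality, Lemma \ref{lem:weyl}, with \(\lambda_1(B)\ge 0\) then gives \(\lambda_l(M) \geq \lambda_l((1-\gamma)\Lambda(x_k))\).
    \item Since \(1-\gamma\geq 0\), scaling preserves the increasing order of the diagonal of \(\Lambda(x_k)\) in \eqref{eq:Lambda}. Hence \(\lambda_l((1-\gamma)\Lambda(x_k)) = (1-\gamma)\lambda_{l+j-1}(H(x_k))\).
\end{enumerate}
Chaining these three yields the one-step bound.

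The induction is then \(\lambda_i(H(x_{k+1})) \geq (1-\gamma)\lambda_i(H(x_k)) \geq (1-\gamma)^{k+1}\lambda_i(H(x_0))\). Multiplying the inductive hypothesis by the nonnegative factor \(1-\gamma\) is legitimate whatever the sign of \(\lambda_i\). The step I expect to need the most care is the index bookkeeping: the ordered eigenvalues of \((1-\gamma)\Lambda(x_k)\) must line up with the \(l\)-th eigenvalue of \(M\), and the interlacing shift \(p-r = j-1\) must land on the same \(i\). Nonnegativity of \(1-\gamma\) is also essential there, and I would check the case \(\gamma=1\), where the scaled matrix is zero and the bound is trivially \(\lambda_i(H(x_{k+1}))\ge 0\).
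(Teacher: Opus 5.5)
Your proposal is correct and follows the paper's proof essentially step for step. It uses matrix concavity of the compressed map \(u \mapsto Y(x_k)^\intercal H(f(x_k)+g(x_k)u)Y(x_k)\) for (i), Lemma \ref{lem:Phi_safe_subset} with Theorem \ref{thm:SDTE_invariance} for (ii), and for (iii) the Weyl lower bound and interlacing upper bound on the compression with the shift \(p-r=j-1\), followed by induction. Your explicit use of \(1-\gamma \geq 0\), both to preserve the eigenvalue ordering under scaling and to multiply the inductive hypothesis, spells out what the paper's proof uses implicitly.
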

\begin{proof}
    (i) The map \(u \mapsto Y(x_k)^\intercal H(f(x_k) + g(x_k) u) Y(x_k)\) is matrix concave by Section \ref{sec:prelim}, and the right-hand side of \eqref{eq:main_constraint} is constant in \(u\).
    (ii) By Lemma \ref{lem:Phi_safe_subset} and the standing feasibility assumption, \(\Phi_H\) is an SDTE-MCBF for \(\SSS^{(j)}\) on \(\R^n\), and the claim follows from Theorem \ref{thm:SDTE_invariance}.
    (iii) Write \eqref{eq:main_constraint} at step \(k\) as \(Y_k^\intercal H(x_{k+1}) Y_k = (1-\gamma)\Lambda_k + B_k\) with \(B_k \succeq 0\).
    By Lemma \ref{lem:weyl}, \(\lambda_i(Y_k^\intercal H(x_{k+1}) Y_k) \geq (1-\gamma) \lambda_i(\Lambda_k) = (1-\gamma) \lambda_{i+j-1}(H(x_k))\) for \(i \in \{1,\ldots,r\}\).
    By Lemma \ref{lem:interlacing}, \(\lambda_i(Y_k^\intercal H(x_{k+1}) Y_k) \leq \lambda_{i+p-r}(H(x_{k+1})) = \lambda_{i+j-1}(H(x_{k+1}))\).
    Combining and reindexing gives \(\lambda_i(H(x_{k+1})) \geq (1-\gamma)\lambda_i(H(x_k))\) for \(i \in \{j,\ldots,p\}\), and \eqref{eq:eig_bound} follows by induction.
\end{proof}

\begin{remark}[Zeroing]
    Theorem \ref{thm:main}(iii) does not require \(x_0 \in \SSS^{(j)}\).
    If \(\lambda_j(H(x_0)) < 0\), then \eqref{eq:eig_bound} shows that \(\lambda_j(H(x_k))\) converges exponentially to zero from below, so that \(x_k\) approaches \(\SSS^{(j)}\) in the sense of the \(j\)th eigenvalue.
\end{remark}

\begin{remark}[Switching]
    The selected subspace \(Y(x_k)\) changes with \(x_k\), and in the diagonal case \(r = 1\) the selected index \(i^* = \arg\max_i h_i(x_k)\) may switch between consecutive steps.
    Forward invariance is established pointwise at each step and is unaffected by switching.
    Switching affects only the conservatism of \eqref{eq:main_constraint}, since the trajectory is confined to the currently selected convex subset until the selection changes.
\end{remark}

\begin{remark}[Relation to \cite{usevitch2026dtmcbf}]
    In \cite{usevitch2026dtmcbf}, safe subset functions are constructed by projection onto convex components of the unsafe set, and the question of whether other constructions exist is left open.
    The function \(\Phi_H\) in \eqref{eq:Phi_H} is a safe subset function obtained without projection, and Theorem \ref{thm:main} shows that the SDTE-MCBF framework applies to it directly.
\end{remark}

\subsection{Comparison with a Full-Matrix Condition}
\label{ssec:comparison}

An alternative sufficient condition for forward invariance of \(\SSS^{(j)}\) is obtained by shifting the whole matrix rather than a selected compression.

\begin{proposition}
    \label{prop:weyl_condition}
    Let \(0 < \gamma \leq 1\).
    If \(u_k\) satisfies
    \begin{align}
        H(f(x_k) + g(x_k) u_k) \succeq H(x_k) - \gamma \lambda_j(H(x_k)) I_p \label{eq:weyl_constraint}
    \end{align}
    for all \(k \geq 0\), then \(\lambda_j(H(x_k)) \geq (1-\gamma)^k \lambda_j(H(x_0))\) for all \(k \geq 0\), and \(\SSS^{(j)}\) is forward invariant.
    Moreover, every \(u_k\) satisfying \eqref{eq:weyl_constraint} satisfies \eqref{eq:main_constraint}.
\end{proposition}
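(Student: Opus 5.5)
The plan is to prove the eigenvalue bound first, then derive forward invariance from it, and finally show the implication by compressing the full-matrix inequality with \(Y(x_k)\).

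\textbf{Eigenvalue bound.} Write \eqref{eq:weyl_constraint} as \(H(x_{k+1}) = H(x_k) - \gamma \lambda_j(H(x_k)) I_p + B_k\) with \(B_k \succeq 0\). Apply Lemma \ref{lem:weyl} with \(A = H(x_k) - \gamma\lambda_j(H(x_k)) I_p\) and \(B = B_k\), using \(\lambda_1(B_k) \geq 0\). This gives
\begin{align*}
\lambda_j(H(x_{k+1})) \geq \lambda_j\big(H(x_k) - \gamma \lambda_j(H(x_k)) I_p\big) = (1-\gamma)\lambda_j(H(x_k)),
\end{align*}
since shifting by a multiple of the identity shifts every eigenvalue by the same amount. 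Induction on \(k\) then gives \(\lambda_j(H(x_k)) \geq (1-\gamma)^k \lambda_j(H(x_0))\).

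\textbf{Forward invariance.} If \(x_0 \in \SSS^{(j)}\), then \(\lambda_j(H(x_0)) \geq 0\). Since \((1-\gamma)^k \geq 0\) for \(0<\gamma\le 1\), the bound gives \(\lambda_j(H(x_k)) \geq 0\) for all \(k\), that is, \(x_k \in \SSS^{(j)}\).

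\textbf{Implication.} Fix \(x_k\) and let \(Y = Y(x_k)\). Congruence by \(Y\) preserves the semidefinite order, so \eqref{eq:weyl_constraint} gives
\begin{align*}
Y^\intercal H(f(x_k)+g(x_k)u_k) Y \succeq \Lambda(x_k) - \gamma \lambda_j(H(x_k)) I_r,
\end{align*}
where \(Y^\intercal Y = I_r\) and \eqref{eq:Lambda} have been used. It remains to show
\begin{align*}
\Lambda(x_k) - \gamma\lambda_j(H(x_k)) I_r \succeq (1-\gamma)\Lambda(x_k).
\end{align*}
This is equivalent to \(\gamma\big(\Lambda(x_k) - \lambda_j(H(x_k)) I_r\big) \succeq 0\). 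Because \(\Lambda(x_k)\) is diagonal with entries \(\lambda_j(H(x_k)),\ldots,\lambda_p(H(x_k))\), all at least \(\lambda_j(H(x_k))\), the matrix \(\Lambda(x_k) - \lambda_j(H(x_k)) I_r\) is positive semidefinite, and \(\gamma>0\). Chaining the two inequalities yields \eqref{eq:main_constraint}.

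No step looks like a real obstacle. The points needing care are that Weyl's inequality is applied in the direction \(\lambda_i(A+B)\ge\lambda_i(A)+\lambda_1(B)\) with \(B \succeq 0\), and that the comparison in the last step rests on \(\lambda_j\) being the smallest selected eigenvalue.
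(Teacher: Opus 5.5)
Your proposal is correct and follows the paper's proof essentially step for step. For the eigenvalue bound you use the same decomposition $H(x_{k+1}) = H(x_k) - \gamma\lambda_j(H(x_k)) I_p + B_k$ with Weyl's inequality and induction, and for the implication you use the same compression by $Y(x_k)$ together with $\Lambda(x_k) \succeq \lambda_j(H(x_k)) I_r$; you also spell out the forward-invariance step from $(1-\gamma)^k \geq 0$, which the paper leaves implicit.
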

%
%
\begin{proof}
    Write \eqref{eq:weyl_constraint} as \(H(x_{k+1}) = H(x_k) - \gamma \lambda_j(H(x_k)) I_p + B_k\) with \(B_k \succeq 0\).
    By Lemma \ref{lem:weyl}, \(\lambda_j(H(x_{k+1})) \geq \lambda_j(H(x_k)) - \gamma \lambda_j(H(x_k)) = (1-\gamma)\lambda_j(H(x_k))\), and the first claim follows by induction.
    For the second claim, compress \eqref{eq:weyl_constraint} by \(Y_k = Y(x_k)\) to obtain \(Y_k^\intercal H(x_{k+1}) Y_k \succeq \Lambda_k - \gamma \lambda_j(H(x_k)) I_r\).
    Since \(\Lambda_k \succeq \lambda_j(H(x_k)) I_r\), it holds that \(\Lambda_k - \gamma \lambda_j(H(x_k)) I_r \succeq \Lambda_k - \gamma \Lambda_k = (1-\gamma)\Lambda_k\), which is \eqref{eq:main_constraint}.
\end{proof}

Proposition \ref{prop:weyl_condition} shows that \eqref{eq:weyl_constraint} is at least as restrictive as \eqref{eq:main_constraint}.
%
Constraint \eqref{eq:weyl_constraint} acts on all \(p\) eigenvalues of \(H(x_{k+1})\).
The \(j-1\) smallest of these eigenvalues do not affect membership in \(\SSS^{(j)}\).
Constraint \eqref{eq:main_constraint} acts only on the \(r\)-dimensional compression that determines membership.
In addition, \eqref{eq:weyl_constraint} is convex in \(u\) only if \(H\) is matrix concave, the same condition required by Theorem \ref{thm:main}, so the relaxation comes at no cost in convexity.

\subsection{Specialization to the Sets of Section \ref{sec:obstacles}}
\label{ssec:specialization}

For the three cases of Section \ref{sec:obstacles}, \(j = p\), \(r = 1\), and \(Y(x_k)\) is a unit eigenvector \(v_k\) of \(H(x_k)\) associated with \(\lambda_p(H(x_k))\).
The constraint \eqref{eq:main_constraint} becomes the scalar inequality
\begin{align}
    v_k^\intercal H(f(x_k) + g(x_k) u)\, v_k \geq (1-\gamma)\, \lambda_p(H(x_k)). \label{eq:scalar_constraint}
\end{align}

\emph{Polytope.}
With \(H_\mathrm{P}\) in \eqref{eq:H_polytope}, \(v_k = e_{i^*}\) with \(i^* = \arg\max_i (a_i^\intercal x_k - b_i)\), and \eqref{eq:scalar_constraint} is the single linear inequality
\begin{align}
    a_{i^*}^\intercal \pth{f(x_k) + g(x_k) u} - b_{i^*} \geq (1-\gamma)\pth{a_{i^*}^\intercal x_k - b_{i^*}}.
\end{align}

\emph{Spectrahedron.}
With \(H_\mathrm{Sp}\) in \eqref{eq:H_spectrahedron}, \(v_k\) is the leading eigenvector of \(-(A_0 + \sum_i x_{k,i} A_i)\), and \eqref{eq:scalar_constraint} is linear in \(u\) since \(H_\mathrm{Sp}\) is affine.

\emph{Union of convex sets.}
With \(H_\cup\) in \eqref{eq:H_union}, \(v_k = e_{i^*}\) with \(i^* = \arg\max_i h_i(x_k)\), and \eqref{eq:scalar_constraint} is \(h_{i^*}(f(x_k) + g(x_k)u) \geq (1-\gamma) h_{i^*}(x_k)\), equivalently the linear matrix inequality
\begin{align}
    H_{i^*}\pth{f(x_k) + g(x_k) u} \succeq (1-\gamma)\, h_{i^*}(x_k)\, I_{p_{i^*}}.
\end{align}

In the first two cases the safety filter \eqref{eq:main_filter} is a quadratic program with one linear constraint; in the third it is a semidefinite program of the same size as the filter for the single convex set \(\SSS_{i^*}\).
In all three cases the filter requires one eigenvector computation of a \(p \times p\) matrix per step in addition to the convex program.

\section{Numerical Simulations}
\label{sec:simulations}


\subsection{Setup}

Consider two discretized double integrators in the plane,
\begin{align}
    x_{k+1} = A x_k + B u_k, \label{eq:dbl_int}
\end{align}
where \(x_k \triangleq \matl p_{\rmh,k} & v_{\rmh,k} & p_{\rmv,k} & v_{\rmv,k} \matr^\intercal\) collects horizontal and vertical positions and velocities, \(u_k \in \R^2\) is the acceleration command, \(A \triangleq \operatorname{diag}(A_\rmd, A_\rmd)\), \(B \triangleq \operatorname{diag}(B_\rmd, B_\rmd)\),
\begin{align}
    A_\rmd \triangleq \matl 1 & T_\rms \\ 0 & 1 \matr, \quad B_\rmd \triangleq \matl T_\rms^2/2 \\ T_\rms \matr,
\end{align}
and \(T_\rms = 0.01\) s.
Let \(C_{\rm pos} \triangleq \matl 1 & 0 & 0 & 0 \\ 0 & 0 & 1 & 0 \matr\) extract the position \(p_k = C_{\rm pos} x_k\).
The nominal controller \(u_{\rm nom}(x_k)\) is a reference-tracking LQR controller for the constant reference \(r_k \equiv \matl 8 & 0 & 8 & 0 \matr^\intercal\), and \(x_0 = 0\).
Details of the LQR design are omitted since the focus is on the safety filter.

Two obstacles are placed between \(x_0\) and the reference.

\begin{itemize}
    \item A regular pentagon \(\overline{\SSS}_\mathrm{P}\) with center \(\matl 3.5 & 2.5 \matr^\intercal\) m and circumradius 2 m, written in the halfspace form of Section \ref{ssec:polytope} with \(p = 5\), \(a_i \in \R^2\), \(b_i \in \R\).
    \item A spectrahedron \(\overline{\SSS}_\mathrm{Sp}\) with center \(\matl 4 & 6 \matr^\intercal\) m, \(p = 3\), and
    \footnotesize
    \begin{align*}
        A_0 = 2 I_3, \quad A_1 &= \matl 0 & 0.8 & 0 \\ 0.8 & 0 & 0.8 \\ 0 & 0.8 & 0 \matr,\\
        A_2 &= \matl 0 & 0 & 1.6 \\ 0 & 0 & 2.4 \\ 1.6 & 2.4 & 0 \matr,
    \end{align*}
    \normalsize
    so that \(\overline{\SSS}_\mathrm{Sp} = \{ p \in \R^2 : A_0 + (p_1 - 3.5) A_1 + (p_2 - 3.5) A_2 \succeq 0 \}\).
\end{itemize}
Since the obstacles depend only on position, the functions of Section \ref{sec:obstacles} are evaluated at \(C_{\rm pos} x\); with a slight abuse of notation, \(H_\mathrm{P}(x)\) and \(H_\mathrm{Sp}(x)\) below denote \(H_\mathrm{P}(C_{\rm pos} x)\) and \(H_\mathrm{Sp}(C_{\rm pos} x)\).
A buffer is added to each obstacle: for the pentagon, \(b_i\) is replaced by \(b_i + \epsilon_\mathrm{P}\) with \(\epsilon_\mathrm{P} = 0.2\); for the spectrahedron, \(p - p_\mathrm{Sp}\) is replaced by \((1 - \epsilon_\mathrm{Sp})(p - p_\mathrm{Sp})\) with \(\epsilon_\mathrm{Sp} = 0.2\), which scales the obstacle about its center.

The two obstacles are handled as separate constraints, one for each \(H\), in the same filter, so that the overall safe set is \(\SSS_\mathrm{P} \cap \SSS_\mathrm{Sp}\), a conjunction of two disjunctive sets.
The filters below apply Theorem \ref{thm:main} to each \(H\) independently, which preserves forward invariance of the intersection.

\subsection{Safety Filters}

Three safety filters are compared, each solving \(\min_u \nrm{u - u_{\rm nom}(x_k)}_2\) subject to one constraint per obstacle, with \(\gamma = 0.2\) and \(f(x_k) = A x_k\), \(g(x_k) = B\).
\begin{itemize}
    \item \textbf{Selected (proposed).} Constraint \eqref{eq:scalar_constraint} for each obstacle, with \(v_k\) the leading eigenvector of \(H_\mathrm{P}(x_k)\), respectively \(H_\mathrm{Sp}(x_k)\).
    Both constraints are linear in \(u\), and the filter is a quadratic program with two linear constraints, solved with \texttt{lsqlin}.
    %
    \item \textbf{Full-matrix.} Constraint \eqref{eq:weyl_constraint} is modified and applied for each obstacle.
    Although \eqref{eq:weyl_constraint} enforces the constraint, in practice it can cause the system to to become trapped near an obstacle.
    Hence, \eqref{eq:weyl_constraint} is replaced with
    \small
    \begin{align}
        H(f(x_k) + g(x_k) u_k)  \succeq & (1 + c_{\rm perp})H(x_k) \nn \\
        & - (\gamma + c_{\rm perp}) \lambda_j(H(x_k)) I_p, \label{eq:weyl_constraint_2}
    \end{align}
    \normalsize
    where $c_{\rm perp} > 0$ enables the system to move along the obstacle boundary in a direction perpendicular to the normal at the closest boundary point.
    For this example, we set $c_{\rm perp} > 0.05.$
    For the pentagon this is five linear inequalities; for the spectrahedron it is a \(3 \times 3\) linear matrix inequality, and the filter is a semidefinite program solved with \texttt{MOSEK} through \texttt{YALMIP}.
\end{itemize}
The full-matrix filter is included to quantify the conservatism established in Proposition \ref{prop:weyl_condition}.
%
%
All simulations run for \(k \in \{0, \ldots, 1200\}\) on a laptop PC running Windows 10 Pro, version 22H2, OS build 19045.7725, with an Intel Core i7-10750H processor running at 2.60 GHz and a 32GB, 3200 MHz RAM, with MATLAB version R2024b Update 5.
%

\subsection{Results}


Figure \ref{fig:traj} shows the trajectories under the three filters and Figure \ref{fig:eigs} shows \(\lambda_p(H_\mathrm{P}(x_k))\) and \(\lambda_p(H_\mathrm{Sp}(x_k))\) along each.
Table \ref{tab:runtime} reports the runtime per step.
%
The following observations follow from the results.
\begin{itemize}
    \item \textbf{Forward invariance:} Both eigenvalue traces remain nonnegative for the selected and full-matrix filters, which confirms the forward invariance of both filters.
    Note that the eigenvalue traces approach zero as the system states approach the boundaries of the safe sets.
    %
    \item \textbf{Optimization solutions:} The trajectories obtained with the selected and full-matrix filters nearly coincide at every step, such that the difference between their position trajectories has a mean of $\matl 2.8458 \cdot 10^{-2} & 4.4482 \cdot 10^{-2} \matr^\rmT$ and standard devition of $\matl 3.8847 \cdot 10^{-2} & 7.0639 \cdot 10^{-2} \matr^\rmT.$
    \item \textbf{Runtime:} The selected filter QP is more than an order of magnitude faster than the full-matrix SDP.
\end{itemize}

\begin{figure}[!ht]
    \vspace{-1em}
    \centering
    \includegraphics[width=\columnwidth]{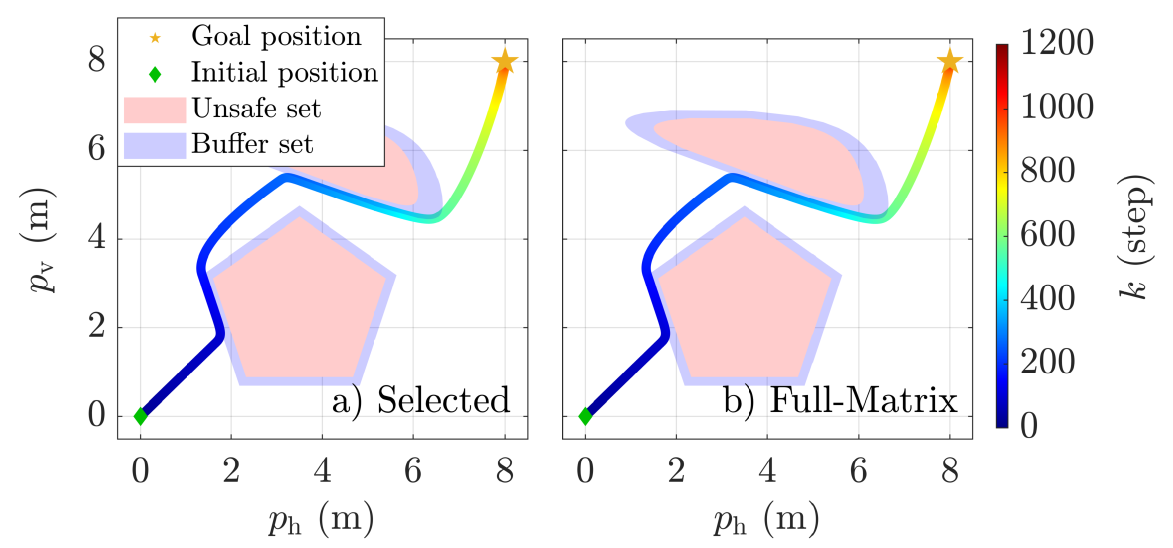}
    \caption{Trajectories under the a) selected and b) full-matrix safety filters. Obstacles are shown with their buffered boundaries.}
    \label{fig:traj}
\end{figure}

\begin{figure}[!ht]
    \vspace{0.5em}
    \centering
    \includegraphics[width=\columnwidth]{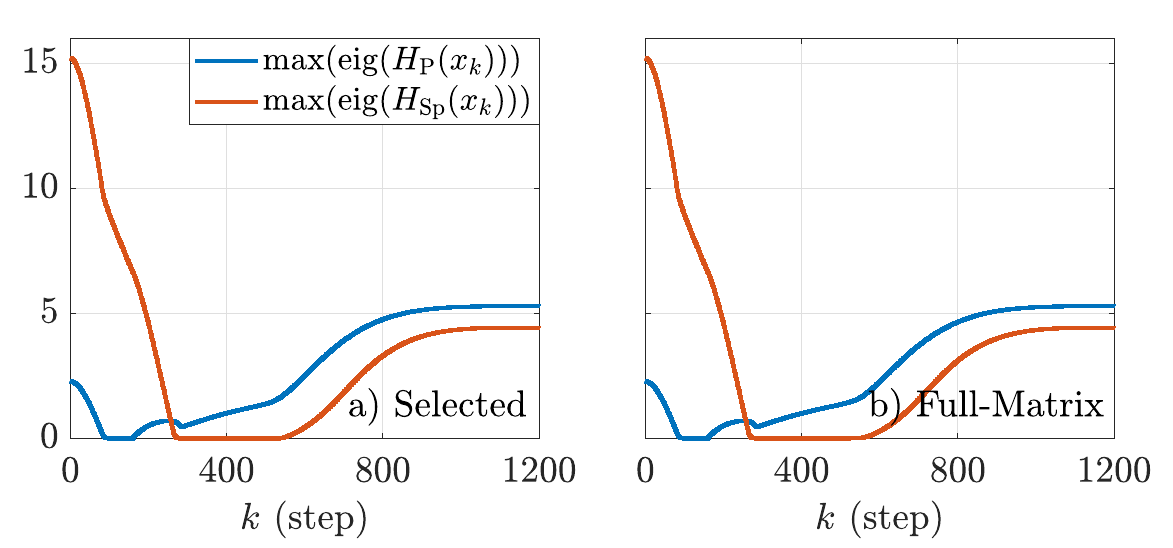}
    \caption{Largest eigenvalue of \(H_\mathrm{P}(x_k)\) and \(H_\mathrm{Sp}(x_k)\) along the trajectories of Figure \ref{fig:traj}  under the a) selected and b) full-matrix safety filters. Nonnegativity of each trace is forward invariance of the corresponding safe set.}
    \label{fig:eigs}
\end{figure}


\begin{table}[!ht]
    \vspace{-1em}
    \centering
    \caption{Runtime per step for the two safety filters.}
    \begin{tabular}{lcc}
    \toprule
    Filter & Mean (s) & Max (s)\\
    \midrule
    Selected (QP) & 1.9001 $\cdot$ 10$^{-3}$ & 5.5191 $\cdot$ 10$^{-2}$ \\
    Full-matrix (SDP) & 9.2078 $\cdot$ 10$^{-2}$ & 6.0094 $\cdot$ 10$^{-1}$ \\
    \bottomrule
    \end{tabular}
    \label{tab:runtime}
\end{table}

\section{Conclusions}
\label{sec:conclusions}

This paper considered discrete-time safety filtering when the safe set is a union of convex sets and the discrete-time constraint is nonconvex in the input.
When the matrix function is matrix concave, the safe set was shown to be a union of convex sets, and complements of polytopes, complements of spectrahedra, and unions of convex sets were written in this form.
Selecting eigenvectors of the matrix function at the current state yields a convex input constraint for any number of required nonnegative eigenvalues, using one eigendecomposition and one convex program per step and no projection onto the unsafe set.
Along closed-loop trajectories, every eigenvalue that determines membership in the set satisfies a geometric lower bound.
The condition that bounds the entire matrix implies the proposed constraint, so the proposed constraint admits at least as many inputs at each state.
In simulation with a pentagon and a spectrahedron obstacle, both the proposed filter and the full-matrix filter kept both safe sets invariant, and their position trajectories differed by centimeters.
The mean solve time per step was \(1.9\) ms for the proposed quadratic program and \(92\) ms for the full-matrix semidefinite program.
The two filters were solved with different solver stacks, so the ratio does not isolate the effect of the constraint.
Both obstacles had one required nonnegative eigenvalue, so the case of several required eigenvalues is covered by the theory and not by the simulations.
Joint feasibility of the constraints from several obstacles is assumed and not established.

Future work will address joint feasibility for multiple obstacles, simulations with several required nonnegative eigenvalues, and applications to unmanned aerial vehicles and legged robots under noise and uncertainty.

\section*{Acknowledgements}

The authors also thank and acknowledge AI tools, including ChatGPT and Claude, for reviewing and revising the manuscript for technical and grammatical errors, critically examining the manuscript, and suggesting corrective actions.

\printbibliography

\end{document}

%% file: PackagesCommands.tex
\usepackage{amsmath} 
\usepackage{amsfonts}
\usepackage{amsthm}

\newtheorem{theorem}{Theorem}

\newtheorem{proposition}{Proposition}[section]
\theoremstyle{definition}

\newtheorem{remark}{Remark}
\newtheorem{lemma}{Lemma}[section]

\usepackage{float} 
\usepackage[skip=0em,font=footnotesize]{caption}

\usepackage{tikz}
\usetikzlibrary{shapes,arrows,calc,positioning}
\tikzstyle{bigblock} = [draw, fill=blue!20, rectangle, 
    minimum height=6em, minimum width=8em]
\tikzstyle{medblock} = [draw, fill=blue!20, rectangle, 
    minimum height=4em, minimum width=4em]    
\tikzstyle{mux} = [draw, fill=black!20, rectangle, 
    minimum height=5em, minimum width=0.1em]    
\tikzstyle{smallblock} = [draw, fill=blue!20, rectangle, 
    minimum height=2em, minimum width=3em]
    
\tikzstyle{data_block} = [draw, fill=green!20, rectangle, 
    minimum height=2em, minimum width=3em]
\tikzstyle{ops_block} = [draw, fill=blue!20, rectangle, 
    minimum height=2em, minimum width=3em]    
\tikzstyle{est_block} = [draw, fill=red!20, rectangle, 
    minimum height=2em, minimum width=3em]    
    
\tikzstyle{sum} = [draw, fill=blue!20, circle, node distance=1cm,minimum height=0.5cm]
\tikzstyle{signal} = [coordinate]
\tikzstyle{pinstyle} = [pin edge={to-,thin,black}]
\tikzstyle{block} = [draw, fill=blue!20, rectangle, 
    minimum height=3em, minimum width=9em]
\tikzstyle{blockS} = [draw, fill=blue!20, rectangle, 
    minimum height=3em, minimum width=4em]    
\tikzstyle{input} = [coordinate]
\tikzstyle{output} = [coordinate]
\usetikzlibrary{matrix}

\usetikzlibrary{positioning}
\usetikzlibrary{math} 

\usetikzlibrary{calc,arrows.meta}
\definecolor{safe}{RGB}{31,119,180}

\usepackage{hyperref}
\usepackage{xcolor}
\hypersetup{
    colorlinks,
    linkcolor={blue!100!black},
    citecolor={blue!50!black},
    urlcolor={blue!80!black}
}
\usepackage{adjustbox}

\newcommand{\bc}{\begin{center}}
\newcommand{\ec}{\end{center}}
\newcommand{\benum}{\begin{enumerate}}
\newcommand{\eenum}{\end{enumerate}}
\newcommand{\nn}{\nonumber}
\newcommand{\matl}{\left[ \begin{array}}
\newcommand{\matr}{\end{array} \right]}

\renewcommand{\matl}{\begin{bmatrix}}
\renewcommand{\matr}{\end{bmatrix}}

\newcommand{\matls}{\left[ \begin{smallmatrix}}
\newcommand{\matrs}{\end{smallmatrix} \right]}

\newcommand{\dx}{{\rm d}x}

\newcommand{\rmT}{{\rm T}}

\newcommand{\rmd}{{\rm d}}

\newcommand{\rmh}{{\rm h}}

\newcommand{\rms}{{\rm s}}

\newcommand{\rmv}{{\rm v}}

\newcommand{\SC}{{\mathcal C}}

\newcommand{\SSS}{{\mathcal S}}

\let\labelindent\relax
\usepackage{enumitem,amssymb}
\newlist{todolist}{itemize}{2}
\setlist[todolist]{label=$\square$}
\usepackage{pifont}
\newtheorem{define}{Definition}

\newcommand{\R}{\mathbb R}

\newcommand{\pth}[1]{\left(#1\right)} 
\newcommand{\nrm}[1]{\left \lVert#1\right \rVert} 

\DeclarePairedDelimiter{\ceil}{\lceil}{\rceil}
\DeclarePairedDelimiter{\floor}{\lfloor}{\rfloor}
\DeclarePairedDelimiter{\abs}{\lvert}{\rvert}

\makeatletter
\let\oldceil\ceil
\def\ceil{\@ifstar{\oldceil}{\oldceil*}}
\makeatother
\makeatletter
\let\oldfloor\floor
\def\floor{\@ifstar{\oldfloor}{\oldfloor*}}
\makeatother
\makeatletter
\let\oldnorm\norm
\def\norm{\@ifstar{\oldnorm}{\oldnorm*}}
\makeatother
\makeatletter
\let\oldabs\abs
\def\abs{\@ifstar{\oldabs}{\oldabs*}}
\makeatother

%% file: Figures/Fig_obstacles_body.tex
\begin{adjustbox}{width=\columnwidth}
\begin{tikzpicture}[scale=1, >={Stealth[length=4pt]}, font=\footnotesize]

\def\L{2.0}      
\def\dx{4.6}     
\def\dy{4.6}     

\def\dd{0.809017}
\newcommand{\halfspace}[3]{
  \fill[#3] ($ (#1:#2) + (#1+90:4) $) -- ($ (#1:#2) + (#1-90:4) $)
            -- ($ (#1:#2) + (#1-90:4) + (#1:4) $) -- ($ (#1:#2) + (#1+90:4) + (#1:4) $) -- cycle;
}
\newcommand{\edgeline}[3]{
  \draw[#3] ($ (#1:#2) + (#1+90:4) $) -- ($ (#1:#2) + (#1-90:4) $);
}

\begin{scope}
  \clip (-\L,-\L) rectangle (\L,\L);
  \foreach \k in {0,...,4}{ \halfspace{126+72*\k}{\dd}{safe, opacity=0.13} }
  \foreach \k in {0,...,4}{ \edgeline{126+72*\k}{\dd}{gray!60, dashed, thin} }
  \draw[thick, fill=white] (90:1) -- (162:1) -- (234:1) -- (306:1) -- (18:1) -- cycle;
  \node at (0,0) {$\overline{\mathcal{S}}_\mathfrak{P}$};
  \node[safe] at (1.35,1.35) {$\mathcal{S}_\mathfrak{P}$};
  \node[font=\scriptsize, rotate=36] at ($ (126:\dd) + (126:0.22) + (216:0.9) $) {$a_1^\intercal x = b_1$};
  \node[anchor=north west, font=\footnotesize\bfseries] at (-\L,\L) {(A)};
\end{scope}

\begin{scope}[xshift=\dx cm]
  \clip (-\L,-\L) rectangle (\L,\L);
  \fill[safe, opacity=0.13] (-\L,-\L) rectangle (\L,\L);
  \def\blob{plot[domain=0:360, samples=90, smooth cycle] (\x:{0.95+0.12*cos(3*\x)})}
  \foreach \t in {0,40,...,320}{
    \pgfmathsetmacro{\r}{0.95+0.12*cos(3*\t)}
    \edgeline{\t}{\r}{gray!60, dashed, thin}
  }
  \draw[thick, fill=white] \blob;
  \node at (0,0) {$\overline{\mathcal{S}}_\mathfrak{Sp}$};
  \node[safe] at (1.35,1.35) {$\mathcal{S}_\mathfrak{Sp}$};
  \node[font=\scriptsize, fill=white, inner sep=1pt] at (-0.85,1.45) {$v^\intercal H(x) v = 0$};
  \node[anchor=north west, font=\footnotesize\bfseries] at (-\L,\L) {(B)};
\end{scope}

\begin{scope}[xshift=2*\dx cm]
  \clip (-\L,-\L) rectangle (\L,\L);
  \begin{scope}[transparency group, opacity=0.13]
    \fill[safe] (-1.8,-0.5) rectangle (0.3,0.5);
    \fill[safe] (-0.3,-1.8) rectangle (0.5,1.2);
    \fill[safe] (0.2,0.6) ellipse (1.5 and 0.7);
  \end{scope}
  \draw[safe, thick] (-1.8,-0.5) rectangle (0.3,0.5);
  \draw[safe, thick] (-0.3,-1.8) rectangle (0.5,1.2);
  \draw[safe, thick] (0.2,0.6) ellipse (1.5 and 0.7);
  \node[safe, font=\scriptsize] at (-1.3,0) {$\mathcal{S}_1$};
  \node[safe, font=\scriptsize] at (0.1,-1.3) {$\mathcal{S}_2$};
  \node[safe, font=\scriptsize] at (1.2,0.6) {$\mathcal{S}_3$};
  \node at (-1.2,-1.3) {$\overline{\mathcal{S}}_\cup$};
  \node[anchor=north west, font=\footnotesize\bfseries] at (-\L,\L) {(C)};
\end{scope}

\end{tikzpicture}
\end{adjustbox}